\newcommand{\myindentedparagraph}[1]{\needspace{1\baselineskip}\medskip \hangindent=11pt \hangafter=0 \noindent{\it #1.}}
\newtheorem{property}{\hspace{0pt}\bf Property}
\newtheorem{mycorollary}{\bf Corollary}
\def \reals    {{\mathbb R}}
\def \diag    {\text{\normalfont diag} }
\begin{document}


\title{The Dual Graph Shift Operator: Identifying the Support of the Frequency Domain}

\author{
	\IEEEauthorblockN{Geert Leus,~\IEEEmembership{Fellow,~IEEE}, Santiago Segarra,~\IEEEmembership{Member,~IEEE}, Alejandro Ribeiro,~\IEEEmembership{Member,~IEEE}, \\and Antonio G. Marques,~\IEEEmembership{Senior Member,~IEEE}}
\thanks{Work in this paper is supported by Spanish MINECO grants No TEC2013-
	41604-R, TEC2016-75361-R and USA NSF CCF-1217963. 
	G. Leus is with the Dept. of Electrical Eng., Math.
	and Comp. Science, Delft Univ. of Technology. S. Segarra is with the Inst. for Data, Systems and Society, Massachusetts Inst. of Technology.  A. Ribeiro are with the Dept. of Electrical and Systems Eng., Univ. of Pennsylvania. A. G. Marques is with the Dept. of Signal Theory and Comms., King Juan Carlos Univ. Emails: g.j.t.leus@tudelft.nl, segarra@mit.edu,  aribeiro@seas.upenn.edu, antonio.garcia.marques@urjc.es. 
	}}


\maketitle
\thispagestyle{empty}
\pagenumbering{arabic}

%

\begin{abstract}
Contemporary data is often supported by an irregular structure, which can be conveniently captured by a graph. Accounting for this graph support is crucial to analyze the data, leading to an area known as graph signal processing (GSP). The two most important tools in GSP are the graph shift operator (GSO), which is a sparse matrix accounting for the topology of the graph, and the graph Fourier transform (GFT), which maps graph signals into a frequency domain spanned by a number of graph-related Fourier-like basis vectors. This alternative representation of a graph signal is denominated the graph frequency signal. Several attempts have been undertaken in order to interpret the support of this graph frequency signal, but they all resulted in a one-dimensional interpretation. However, if the support of the original signal is captured by a graph, why would the graph frequency signal have a simple one-dimensional support? That is why, for the first time, we propose an irregular support for the graph frequency signal, which we coin the dual graph. The dual GSO leads to a better interpretation of the graph frequency signal and its domain, helps to understand how the different graph frequencies are related and clustered, {enables the development of better graph filters and filter banks, and facilitates the generalization of classical SP results to the graph domain.} 
\end{abstract}

\begin{IEEEkeywords}
	Graph signal processing, Dual graph shift operator, Frequency support, Graph Fourier transform, Duality.
\end{IEEEkeywords}




\section{Introduction}\label{sec:intro}

Graph signal processing (GSP) has emerged as an effective solution to handle data with an irregular support. Its approach is to represent this support by a graph, view the data as a signal defined on its nodes, and use algebraic and spectral properties of the graph to study the signals \cite{EmergingFieldGSP}. Such a data structure appears in many domains, including social networks, smart grids, sensor networks, and neuroscience. Instrumental to GSP are the notions of the graph shift operator (GSO), which is a matrix that accounts for the topology of the graph, and the graph Fourier transform (GFT), which allows the representation of graph signals in the so-called graph frequency domain. These tools are the fundamental building blocks for the development of compression schemes, filter banks, node-varying filters, windows, and other GSP techniques \cite{SamplingKovacevic_without_Moura_15,Oursampling_journal_2015,SandryMouraSPG_TSP13,isufi2017autoregressive,segarra2015graphfilteringTSP15,shuman2016vertex,marques2016stationaryTSP16}. 

Motivated by the practical importance of the GFT, some efforts have been made to establish a \textit{total ordering} of the graph frequencies \cite{EmergingFieldGSP,SandryMouraSPG_TSP14Freq,RabICASSP12_ApproxSignalsGraphs}, implicitly assuming a one-dimensional support for the graph frequency signal. Such an ordering translates into proximities between frequencies, which are critical for the definition of bandlimitedness and smoothness as well as for the design of sampling and (bank) filtering schemes. However, the basis vectors associated with frequencies that are close in such one-dimensional domains are often dissimilar and focus on completely different parts of the graph \cite{teke2016discreteuncertainty}, suggesting that a one-dimensional support is not descriptive enough to capture the similarity relationships between graph frequencies. {To overcome that limitation,} we propose the first description of the (not necessarily regular) support of a graph frequency signal by means of a graph (which we denominate as the \emph{dual graph}\footnote{This is not related to the graph theoretic notion of dual graph of a \textit{planar} graph $\ccalG$, which is a graph that has a vertex for each \textit{face} of $\ccalG$ \cite{yellen2003handbook}.}) and its corresponding dual GSO. This \textit{dual} GSO helps in describing the existing relations across frequencies, which can be ultimately leveraged to enhance existing vertex-frequency GSP schemes.

\section{The dual graph}
We start by reviewing fundamental concepts of GSP and then state formally the problem of identifying the dual GSO.

\subsection{Fundamentals of GSP}
Consider a graph $\ccalG$ of $N$ nodes or vertices with node set $\ccalN = \{ n_1, ..., n_N \}$ and edge set $\ccalE = \{ (n_{i}, n_{j}) \, | \, \text{$n_{i}$ is connected to $n_{j}$} \}$. The graph $\ccalG$ is further characterized by the so-called GSO, which is an $N\times N$ matrix $\bbS $ whose entries $[ \bbS ]_{ij}$ for $i \neq j$ are zero whenever nodes $n_i$ and $n_j$ are not connected. The diagonal entries of $\bbS$ can be selected freely and typical choices for the GSO include the Laplacian or adjacency matrices \cite{EmergingFieldGSP,SandryMouraSPG_TSP14Freq}. A graph signal defined on $\ccalG$ can be conveniently represented by a vector $\bbx = [ x_1,..., x_N ]^T \in \mathbb{C}^{N }$, where $x_i$ is the signal value associated with node $n_i$.

The GSO $\bbS$ -- encoding the structure of the graph -- is crucial to define the GFT and graph filters. The former transforms graph signals into a frequency domain, whereas the latter represents a class of local linear operators between graph signals. 
Assume for simplicity that the GSO $\bbS$ is normal, such that its eigenvalue decomposition (EVD) can always be written as $\bbS = \bbV \bbLambda \bbV^H$, where $\bbV$ is a unitary matrix that stacks the eigenvectors and $\bbLambda$ is a diagonal matrix that stacks the eigenvalues. 
To simplify exposition, we also assume the eigenvalues of the shift are simple (non-repeated), such that the associated eigenspaces are unidimensional. The eigenvectors $\bbV = [ \bbv_1, ..., \bbv_N ]$ correspond to the graph frequency basis vectors whereas the eigenvalues $\bblambda = \text{diag}(\bbLambda) = [ \lambda_1, ..., \lambda_N ]^T$ can be viewed as graph frequencies. With these conventions, the definitions of the GFT and graph filters are given next.

\begin{definition}\label{def_GFT} 
	Given the GSO $\bbS=\bbV\bbLambda\bbV^H$, the GFT of the graph signal $\bbx\in \mathbb{C}^{N }$ is $\tdbbx = [ \tdx_1, ..., \tdx_N ]^T := \bbV^H \bbx$.
	%
	%
\end{definition}
\begin{definition}\label{def_filters} 
	Given the GSO $\bbS=\bbV\bbLambda\bbV^H$, a graph filter $\bbH\!\in\! \mathbb{C}^{N \!\times\! N}\!$ of degree $L$ is a graph-signal operator of the form 
	\begin{eqnarray}\label{E:Filter_input_output_time}
	&\bbH=\bbH(\bbh,\bbS):=\sum_{l=0}^{L}h_l \bbS^l=\bbV\diag(\tdbbh)\bbV^H,&
	\end{eqnarray}
	where $\bbh:=[h_0,...,h_L]$ and $\tdbbh := \diag(\sum_{l=0}^{L}h_l \bbLambda^l)$.
\end{definition}
Definition \ref{def_GFT} implies that the inverse GFT (iGFT) is simply $\bbx = \bbV \tdbbx$. Vector $\bbh$ in Definition \ref{def_filters} collects the filter coefficients and $\tdbbh\in \mathbb{C}^{N } $ in \eqref{def_filters}  can be deemed as the frequency response of the filter.
{The particular case of the filter being $\bbH = \bbS$, so that $\tdbbh = \bblambda$, will be subject of further discussion in Section \ref{sec_finding_dual_eigenvecs}.}
Graph filters and the GFT have been shown useful for sampling, compression, filtering, windowing, and spectral estimation of graph signals \cite{SamplingKovacevic_without_Moura_15,Oursampling_journal_2015,SandryMouraSPG_TSP13,isufi2017autoregressive,segarra2015graphfilteringTSP15,shuman2016vertex,marques2016stationaryTSP16}.

\subsection{Support of the frequency domain}

The underlying assumption in GSP is that to analyze and process the graph signal $\bbx \in \mathbb{C}^{N }$ one has to take into account its graph support $\ccalG$ via the associated GSO $\bbS$. Moreover, according to Definition 1 the graph frequency signal $\tdbbx\in \mathbb{C}^{N }$ is an alternative representation of $\bbx$. Thus, a natural problem is the identification of the graph and the GSO corresponding to $\tdbbx$. More precisely, we are interested in finding the dual graph $\ccalG_f$ -- represented via the corresponding dual GSO $\bbS_f$ -- that characterizes the support of the frequency domain. 

Let $\ccalN_f= \{ n_{f,1},  ...,  n_{f,N}\}$ denote the node set of the dual graph $\ccalG_f$. Each element in  $\ccalN_f$ corresponds to a different frequency $(\lambda_i,\bbv_i)$, thus, the edge set $\ccalE_f$ indicates pairwise relations between the different frequencies. We interpret $\tdbbx$ as a signal defined on this dual graph, where $\tdx_i$ is associated with the node (frequency) $n_{f,i}$. 
As for the primal GSO, the EVD of the $N \!\times\! N$ matrix $\bbS_f$ {associated with $\ccalG_f$} will be instrumental to study $\tdbbx$. We start from the assumption that normality of $\bbS$ implies normality of $\bbS_f$. Later on, we will see that this assumption is valid. Due to normality, we have then that $\bbS_f = \bbV_f \bbLambda_f \bbV^H_f$, and thus the dual graph has (dual) frequency basis vectors $\bbV_f = [ \bbv_{f,1},..., \bbv_{f,N} ]$ and (dual) graph frequencies $\bblambda_f = \text{diag}(\bbLambda_f) = [ \lambda_{f,1},..., \lambda_{f,N} ]^T$ (cf. Fig. \ref{fig_primal_dual_shift_notation}). 

\vspace{.15cm}

\noindent \textbf{Problem statement.} \textit{Given the GSO $\bbS = \bbV \bbLambda \bbV^H$ find the dual GSO $\bbS_f = \bbV_f \bbLambda_f \bbV^H_f$.}

\vspace{.15cm}

To address this problem we postulate desirable properties that we want the dual GSO to satisfy. First, we start by identifying $\bbV_f$ (Section \ref{sec_finding_dual_eigenvecs}). We then proceed to determine $\bbLambda_f$ (Section \ref{sec_finding_dual_eigenvals}), which is a more challenging problem. 

\begin{figure}
	\centering
	\includegraphics[width=0.5\textwidth]{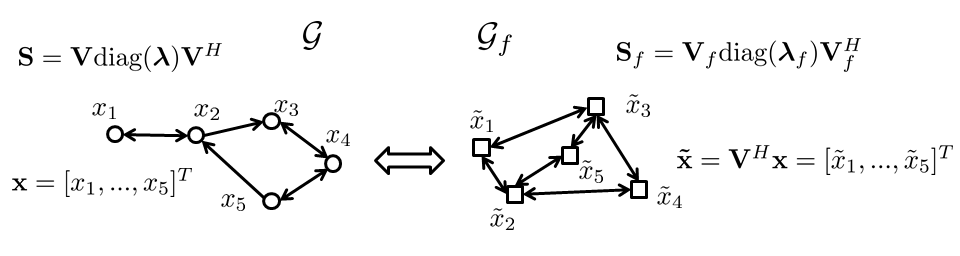}
	\vspace{-0.35in}
	\caption{ The primal graph (left) represents the support of the vertex domain, while the dual graph (right) represents the support of the frequency domain.}
	\vspace{-0.10in}
	\label{fig_primal_dual_shift_notation}
\end{figure}

\section{Eigenvectors of the dual graph}\label{sec_finding_dual_eigenvecs}

We want the GFT $\bbV_f^H$ associated with the dual graph to map $\tdbbx$ back to graph signal $\bbx$. Given that $\tdbbx = \bbV^H \bbx$ (cf. Definition~\ref{def_GFT}), the ensuing result follows.
\begin{property}\label{Prop:eigenvectors}
Given the primal GSO $\bbS=\bbV \bbLambda \bbV^H$, the eigenvectors of the dual GSO $\bbS_f$ are $\bbV_f = \bbV^H$.
\end{property}

As announced in the previous section, since we have that $\bbV_f^{-1}=\bbV_f^H$, then the dual shift $\bbS_f$ is normal too. 
With $\bbe_i\in\reals^N$ denoting the $i$th canonical basis vector (all entries are zero except for {the one corresponding to the $i$th \textit{node}}, which is one), then $\bbv_{f,i}$ can be written as $\bbv_{f,i}=\bbV^H\bbe_i=\tdbbe_i$, i.e., the GFT of the graph signal $\bbe_i$. Hence, {the dual frequency vector} $\bbv_{f,i}$ can be viewed as \textit{how} node $i$ expresses each of the primal graph frequencies, {revealing that each \textit{frequency of the dual} graph $\ccalG_f$ is related to a particular \textit{node of the primal}} graph $\ccalG$. 
{Moreover, we can also interpret the dual eigenvalues from a primal perspective. To that end, note that $\bblambda_f$ is the frequency response of the dual filter $\tdbbH=\bbS_f$ (cf. discussion after Definition~\ref{def_filters}); thus, the $i$th entry of $\bblambda_f$ can be understood as how strongly the primal value at the $i$th node $x_i$ is amplified when $\bbS_f$ is applied to $\tdbbx$.}

One interesting implication of Property 1 is that the dual of a Laplacian shift $\bbS = \bbV \bbLambda \bbV^H$ is, in general, not a Laplacian. Laplacian matrices require the existence of a constant eigenvector. Hence, for $\bbS_f$ to be a Laplacian, one of the rows of $\bbV$ -- corresponding to the columns of $\bbV_f$ -- needs to be constant, which in general is not the case. Another implication of Property 1 is the duality of the filtering and windowing operations, as shown next. 


\begin{mycorollary}
	Given the graph signal $\bbx\in \reals^N $ and the window $\bbw\in \reals^N $, define the windowed graph signal $\bbx_\bbw\in \reals^N $ as
	\begin{equation}\label{Eq:windowing_in_vertex}
	\bbx_\bbw = \diag(\bbw)\bbx.
	\end{equation}
	Then, recalling that $\tdbbx=\bbV^H\bbx$ and $\tdbbx_\bbw=\bbV^H\bbx_\bbw$, if $\bbS_f$ does not have repeated eigenvalues it holds that
	\begin{equation}\label{Eq:windowing_in_frequency}
	\tdbbx_\bbw\!=\!\bbH(\bbh_f,\bbS_f)\tdbbx,\;\;\;\text{with}\;\; \bbH(\bbh_f,\bbS_f)\!=\!\textstyle\sum_{l=0}^L h_{f,l}(\bbS_f)^l
	\end{equation} 
	for some $\bbh_f:=[h_{f,0},...,h_{f,L}]^T$ and $L\leq N-1$.
\end{mycorollary}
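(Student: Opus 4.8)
The plan is to recast the claimed vertex-windowing/frequency-filtering identity as a polynomial interpolation problem on the dual eigenvalues. First I would express the windowing \eqref{Eq:windowing_in_vertex} purely in the frequency domain: substituting the iGFT $\bbx=\bbV\tdbbx$ and premultiplying by $\bbV^H$ gives
\[
\tdbbx_\bbw=\bbV^H\diag(\bbw)\bbV\tdbbx .
\]
Hence it suffices to find coefficients $\bbh_f$ for which the dual filter realizes the matrix $\bbV^H\diag(\bbw)\bbV$, i.e. $\bbH(\bbh_f,\bbS_f)=\bbV^H\diag(\bbw)\bbV$; the stated identity then holds for every $\bbx$.

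Next I would diagonalize the dual filter. By Property~\ref{Prop:eigenvectors} we have $\bbV_f=\bbV^H$, so $\bbS_f=\bbV^H\bbLambda_f\bbV$ and therefore $(\bbS_f)^l=\bbV^H\bbLambda_f^l\bbV$, which yields
\[
\bbH(\bbh_f,\bbS_f)=\sum_{l=0}^{L}h_{f,l}(\bbS_f)^l=\bbV^H\Big(\sum_{l=0}^{L}h_{f,l}\bbLambda_f^l\Big)\bbV .
\]
Cancelling the unitary factors $\bbV$ and $\bbV^H$, the desired matrix identity is equivalent to $\sum_{l=0}^{L}h_{f,l}\bbLambda_f^l=\diag(\bbw)$. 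Since both sides are diagonal, reading this off entry by entry reduces the whole claim to requiring that the polynomial $p_f(\lambda):=\sum_{l=0}^{L}h_{f,l}\lambda^l$ satisfy the $N$ conditions $p_f(\lambda_{f,i})=w_i$ for $i=1,\dots,N$.

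Finally I would establish the existence of these coefficients. Collecting the $N$ conditions into a linear system for $\bbh_f$ produces a Vandermonde system whose nodes are the dual eigenvalues $\lambda_{f,1},\dots,\lambda_{f,N}$. This is exactly where the hypothesis that $\bbS_f$ has no repeated eigenvalues is used: it makes these nodes distinct, so the $N\times N$ Vandermonde matrix is invertible and a (unique) coefficient vector $\bbh_f$ of degree $L=N-1$ exists, completing the argument. I expect this final step to be the only real obstacle, and it is a conceptual rather than a computational one: the non-repeated-eigenvalue assumption is indispensable, since coincident dual eigenvalues $\lambda_{f,i}=\lambda_{f,j}$ would force the interpolation constraints $w_i=w_j$ and in general make the representation impossible.
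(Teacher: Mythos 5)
Your proposal is correct and follows essentially the same route as the paper's own proof: both reduce the claim to realizing $\bbV^H\diag(\bbw)\bbV$ as a polynomial in $\bbS_f$ via Property~\ref{Prop:eigenvectors}, and both conclude by solving the resulting Vandermonde system $\bbw=\bbPsi_f\bbh_f$, which is invertible precisely because the dual eigenvalues are distinct. Your explicit diagonalization-and-cancellation step is just a more spelled-out version of the paper's observation that $\bbV^H$ are the eigenvectors and $\bbw$ the eigenvalues of the mapping, so no substantive difference exists.
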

\begin{proof}
Substituting $\bbx_\bbw = \diag(\bbw)\bbx$ and $\bbx=\bbV \tdbbx$ into the definition of  $\tdbbx_\bbw$ yields $\tdbbx_\bbw=\bbV^H\diag(\bbw)\bbV\tdbbx$. This reveals that the mapping from $\tdbbx$ to 	$\tdbbx_\bbw$ is given by the matrix $\tdbbH=\bbV^H\diag(\bbw)\bbV$.  Since $\bbV^H$ is normal and unitary, $\bbV^H$ are the eigenvectors of $\tdbbH$ and $\bbw$ are its eigenvalues. Because $\bbV^H$ are also the eigenvectors of $\bbS_f$ (cf. Property 1), to show that $\tdbbH$ is a filter on $\bbS_f$ we only need to show that there exist coefficients $\bbh_f:=[h_{f,0},...,h_{f,N-1}]^T$ such that $\bbw= \diag(\sum_{l=0}^{N-1}h_{f,l} \bbLambda_f^l)$ [cf. \eqref{E:Filter_input_output_time}]. Defining $\bbPsi_f\in \mathbb{C}^{N \times N}$ as $[\bbPsi_f]_{i,l}=(\lambda_{f,i})^{l-1}$, the equality can be written as $\bbw=\bbPsi_f \bbh_f$. Since $\bbPsi_f$ is Vandermonde, if all the dual eigenvalues $\{\lambda_{f,i}\}_{i=1}^N$ are distinct, an $\bbh_f$ solving $\bbw=\bbPsi_f \bbh_f$ exists. \hfill $\blacksquare$
\end{proof}
The proof holds regardless of the particular $\bblambda_f$ and only requires $\bbS_f$ to have non-repeated eigenvalues. The corollary states that multiplication in the vertex domain is equivalent to filtering in the dual domain -- note that {the GSO of the filter in \eqref{Eq:windowing_in_frequency} is $\bbS_f$}. Clearly, when the entries of $\bbw$ are binary values, multiplying $\bbx$ by $\bbw$ acts as a windowing procedure preserving the values of $\bbx$ in the support $\bbw$, while discarding the information at the remaining nodes.

\section{Eigenvalues of the dual graph}\label{sec_finding_dual_eigenvals}

Given $\bbS=\bbV \diag(\bblambda) \bbV^H $ and using Property 1 to write the dual shift as $\bbS_f = \bbV^H \diag(\bblambda_f)\bbV$, the last step to identify $\bbS_f$ is to obtain $\bblambda_f$. Two different (complementary) approaches to accomplish this are discussed next.

\subsection{Axiomatic approach}\label{sec_axiomatic_approach}

{Our first approach is} to postulate properties that we want {the} dual shift $\bbS_f$ to satisfy, and then translate these properties into requirements on the dual eigenvalues $\bblambda_f$. We denominate these properties as \emph{axioms}, which we state next. {In the following,} $\bbP$ denotes an arbitrary permutation matrix.

\myindentedparagraph{(A1) Axiom of Duality} The dual of the dual graph is equal to the original graph 
\begin{equation}\label{E:axiom_duality}
(\bbS_f)_f = \bbS. \vspace{-0.10in}
\end{equation}
\vspace{-0.10in}
\myindentedparagraph{(A2) Axiom of Reordering} The dual graph is robust to reordering the nodes in the primal graph
\begin{equation}\label{E:axiom_reordering}
(\bbP \bbS \bbP^T)_f = \bbS_f.\vspace{-0.10in}
\end{equation}
\vspace{-0.10in}

\myindentedparagraph{(A3) Axiom of Permutation} Permutations in the EVD of the primal shift lead to permutations in the dual graph
\begin{equation}\label{E:axiom_permutation}
(\bbV \bbP \diag(\bbP^T \bblambda) \bbP^T \bbV^H)_f = \bbP^T (\bbV \diag(\bblambda) \bbV^H)_f \bbP.
\end{equation}

\vspace{0.1in}

\noindent Consistency with Property~\ref{Prop:eigenvectors} is encoded in the Axiom of Duality (A1). More precisely, since the GFT of the dual shift transforms a frequency signal $\tilde{\bbx}$ back into the graph domain $\bbx$, we want the associated shift to be recovered as well.The Axiom of Reordering (A2) ensures that the frequency structure encoded in the dual shift is invariant to relabelings of the nodes in the primal shift. Specifically, the frequency coefficients of a given signal $\bbx$ with respect to $\bbS$ should be the same as those of  $\bbx' = \bbP \bbx$ with respect to $\bbS' = \bbP \bbS \bbP^T$. Finally, since the nodes of the dual graph correspond to different frequencies, the Axiom of Permutation (A3) ensures that if we permute the eigenvectors (and corresponding eigenvalues) of $\bbS$, the nodes of the dual shift are permuted accordingly.

Axioms (A1)-(A3) impose conditions on the possible choices for the dual eigenvalues $\bblambda_f$. More precisely, let us define the function $h: \mathbb{C}^N \times \mathbb{C}^{N \times N} \to \mathbb{C}^N$, that computes the dual eigenvalues $\bblambda_f = h(\bblambda, \bbV)$ as a function of the eigen-decomposition of $\bbS$. In terms of $h$, axiom (A1) requires that
\begin{equation}\label{E:duality_h}
\bblambda =  h ( \bblambda_f, \bbV_f) = h ( h( \bblambda, \bbV), \bbV^H).
\end{equation}
In order to translate \eqref{E:axiom_reordering} into a condition on $h$, notice that $\bbP \bbS \bbP^T = \bbP \bbV \diag(\bblambda) \bbV^H \bbP^T$ so that $(\bbP \bbS \bbP^T)_f$ from Property~\ref{Prop:eigenvectors} must be equal to $\bbV^H \bbP^T \diag(\bblambda') \bbP \bbV$. Thus, for $(\bbP \bbS \bbP^T)_f$ to coincide with $\bbS_f$ we need $\bblambda' = \bbP \bblambda_f$ which ultimately requires that
\begin{equation}\label{E:reordering_h}
 h( \bblambda, \bbP \bbV) = \bblambda' = \bbP \bblambda_f = \bbP h( \bblambda, \bbV).
\end{equation}
Lastly, in order to find the requirement imposed by axiom (A3) on $h$, we again leverage Property~\ref{Prop:eigenvectors} to obtain $(\bbV \bbP \diag(\bbP^T \bblambda) \bbP^T \bbV^H)_f = \bbP^T \bbV^H \diag(\bblambda') \bbV \bbP$. It readily follows that to satisfy \eqref{E:axiom_permutation} we need $\bblambda' = \bblambda_f$, i.e.
\begin{equation}\label{E:permutations_h}
h{( \bbP^T \bblambda, \bbV \bbP)} = \bblambda' = \bblambda_f = h{( \bblambda, \bbV)}.
\end{equation}

It is possible to find a function $h$ that simultaneously satisfies \eqref{E:duality_h}-\eqref{E:permutations_h}, as shown next.

\begin{theorem}\label{th:dgraph}
The following class of functions satisfies~\eqref{E:duality_h}-\eqref{E:permutations_h}, leading to a generating method for dual graphs that abides by axioms (A1)-(A3)
\begin{equation}\label{eq:dgraph}
\bblambda_f = h( \bblambda, \bbV) = \bbD_f^{-1} \bbV \bbD \bblambda,
\end{equation}
where $\bbD = \text{diag} ( g(\bbv_1), \dots, g(\bbv_N) )$ and $\bbD_f = \text{diag} ( g(\bbv_{f,1}), \dots, g(\bbv_{f,N}) )$, with $g(\cdot)$ any permutation invariant function, i.e., $g(\bbP \bbx) = g(\bbx)$.
\end{theorem}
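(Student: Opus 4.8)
The plan is to verify the three functional equations \eqref{E:duality_h}--\eqref{E:permutations_h} directly, substituting the proposed $h(\bblambda,\bbV)=\bbD_f^{-1}\bbV\bbD\bblambda$ and reducing each to an identity. The whole argument hinges on a single observation: how the diagonal factors $\bbD$ and $\bbD_f$ respond to left- and right-multiplication of $\bbV$ by a permutation $\bbP$. I would establish these transformation rules once and reuse them for all three axioms.

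First I would record the two rules. Recall that $\bbD$ is assembled column-by-column from the eigenvectors $\bbv_i$ (the columns of $\bbV$), whereas $\bbD_f$ is assembled from the dual eigenvectors $\bbv_{f,i}=\bbV^H\bbe_i$ (the columns of $\bbV^H$, cf.\ Property~\ref{Prop:eigenvectors}); both are invertible provided $g$ never vanishes on these vectors, a well-definedness caveat I would flag at the outset. Now, right-multiplication $\bbV\mapsto\bbV\bbP$ merely \emph{reorders} the columns $\bbv_i$, so $\bbD\mapsto\bbP^T\bbD\bbP$, while it permutes the \emph{entries within} each $\bbv_{f,i}$, so permutation invariance $g(\bbP\bbx)=g(\bbx)$ leaves $\bbD_f$ untouched. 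Symmetrically, left-multiplication $\bbV\mapsto\bbP\bbV$ permutes the entries within each $\bbv_i$, leaving $\bbD$ fixed, while it reorders the columns $\bbv_{f,i}$, so $\bbD_f\mapsto\bbP\bbD_f\bbP^T$. This left/right dichotomy is precisely what the two reordering-type axioms require.

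Armed with these rules, the permutation axiom \eqref{E:permutations_h} is immediate: substituting $\bbV\bbP$ and $\bbP^T\bblambda$ gives $h(\bbP^T\bblambda,\bbV\bbP)=\bbD_f^{-1}(\bbV\bbP)(\bbP^T\bbD\bbP)(\bbP^T\bblambda)$, and the permutation factors telescope through $\bbP\bbP^T=\bbI$ back to $\bbD_f^{-1}\bbV\bbD\bblambda=h(\bblambda,\bbV)$. The reordering axiom \eqref{E:reordering_h} is equally direct: $h(\bblambda,\bbP\bbV)=(\bbP\bbD_f\bbP^T)^{-1}(\bbP\bbV)\bbD\bblambda=\bbP\bbD_f^{-1}\bbV\bbD\bblambda=\bbP\,h(\bblambda,\bbV)$, now using $\bbP^T\bbP=\bbI$.

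The duality axiom \eqref{E:duality_h} is where the bookkeeping demands the most care, and is the step I expect to be the main obstacle. The key realization is that passing to the dual, $\bbV\mapsto\bbV_f=\bbV^H$, \emph{swaps} the roles of the two diagonal factors: the factor built from the columns of $\bbV^H$ is $\bbD_f$, while the one built from the columns of $(\bbV^H)^H=\bbV$ is $\bbD$. Hence $h(\bblambda_f,\bbV^H)=\bbD^{-1}\bbV^H\bbD_f\bblambda_f$. Substituting $\bblambda_f=h(\bblambda,\bbV)=\bbD_f^{-1}\bbV\bbD\bblambda$, the $\bbD_f$ factors cancel, $\bbV^H\bbV=\bbI$ collapses the unitary factors, and $\bbD^{-1}\bbD=\bbI$ finishes, yielding $h(h(\bblambda,\bbV),\bbV^H)=\bblambda$. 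With all three identities verified, the proposed $h$ satisfies \eqref{E:duality_h}--\eqref{E:permutations_h} and therefore generates dual graphs consistent with axioms (A1)--(A3).
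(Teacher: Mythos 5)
Your proposal is correct and follows essentially the same route as the paper's proof: the same direct substitution into \eqref{E:duality_h}--\eqref{E:permutations_h}, the same observation that left/right multiplication of $\bbV$ by $\bbP$ permutes the diagonal of $\bbD_f$ (resp.\ $\bbD$) while leaving the other factor fixed by permutation invariance of $g$, and the same role-swap of $\bbD$ and $\bbD_f$ under $\bbV\mapsto\bbV^H$ for the duality axiom. Your only additions are organizational (stating the transformation rules once up front) and the sensible caveat that $g$ must not vanish on the eigenvectors for $\bbD$ and $\bbD_f$ to be invertible, which the paper leaves implicit.
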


\begin{proof}
We show that~\eqref{eq:dgraph} satisfies~\eqref{E:duality_h},~\eqref{E:reordering_h}, and~\eqref{E:permutations_h}. Showing that~\eqref{E:duality_h} holds, requires only substituting~\eqref{eq:dgraph} into $h ( h( \bblambda, \bbV), \bbV^H)$, which yields
\[ h ( h( \bblambda, \bbV), \bbV^H) = \bbD^{-1} \bbV^H \bbD_f ( \bbD_f^{-1} \bbV \bbD \bblambda) = \bblambda. \]
In order to show \eqref{E:reordering_h}, notice that a permutation of the rows of $\bbV$ (the columns of $\bbV_f$) does not influence $\bbD$ and only permutes the diagonal entries of $\bbD_f$. Hence, we can write $h( \bblambda, \bbP \bbV)$ as
\begin{align} 
h( \bblambda, \!\bbP \bbV) \!=\! ( \bbP \bbD_f \bbP^T)^{-1} \bbP \bbV \bbD \bblambda \!=\! \bbP \bbD_f^{-1}  \bbV \bbD \bblambda \!=\! \bbP h( \bblambda, \!\bbV). \nonumber
\end{align}
Finally, since a permutation of the columns of $\bbV$ (the rows of $\bbV_f$) does not influence $\bbD_f$ and only permutes the diagonal entries of $\bbD$, we can write $h( \bbP^T \bblambda, \bbV \bbP)$ as [cf.~\eqref{E:permutations_h}]
\begin{align} 
h ( \bbP^T \bblambda, \bbV \bbP) &= \bbD_f^{-1} ( \bbV \bbP)  ( \bbP^T \bbD \bbP ) ( \bbP^T \bblambda)= \bbD_f^{-1} \bbV \bbD \bblambda \nonumber \\
& = h (\bblambda, \bbV). \hspace{4.8cm} \blacksquare\nonumber
\end{align}
\end{proof}

Note that Theorem \ref{th:dgraph} {proves} the existence of a class of eligible dual graphs, but it does not indicate that \emph{every} dual graph falls in this class. If we restrict ourselves to the class in \eqref{eq:dgraph}, which can be described by the function $g(\cdot)$, the simplest choice for $g(\cdot)$ is $g( \bbx )=1$. This results in $\bblambda_f =\bbV \bblambda$, but any power of any norm is also a valid function, i.e., $g(\bbx) = \| \bbx \|_p^q$. A possible policy to design a dual graph could be to select the function $g(\cdot)$ that optimizes a particular figure of merit (such as the minimization of the number of edges in the dual graph  $\ccalG_f$) yet keeping faithful to (A1)-(A3). This problem is discussed in more detail at the end of the following section. Furthermore, additional axioms can be imposed on $\bbS_f$ to further winnow the class of admissible functions $h$. A possible avenue, not investigated here, is to impose a desirable behavior of $\bbS_f$ with respect to the intrinsic phase ambiguity of the primal EVD.


\subsection{Optimization approach}\label{sec_optimization_approach}

A different (and complementary) approach is to find a dual shift $\bbS_f$ for which certain properties of practical relevance are promoted. For example, one may be interested in the sparsest $\bbS_f$. To be rigorous, consider that the primal shift $\bbS=\bbV\bbLambda\bbV^H$ is given. Then, upon setting $\bbv_{f,i} = \bbV^H\bbe_i$ (cf. Property 1), the dual shift $\bbS_f$ is found by solving
\begin{align}\label{E:general_problem_topology_id} 
\!\!\!\min_{ \{ \bbS_f, \, \bblambda_f \} } \ell (\bbS_f) \quad \text{s. to }\;\bbS_f\!=\!\textstyle\sum_{i =1}^N \lambda_{f,i}\bbv_{f,i}\bbv_{f,i}^H, \,\, \bbS_f \in \ccalS.
\end{align}
In the above problem the optimization variables are effectively the eigenvalues $\bblambda_f$, since the constraint $\bbS_f=\sum_{i =1}^N \lambda_{f,i}\bbv_{f,i}\bbv_{f,i}^H$ forces the columns of $\bbV_f$ to be the eigenvectors of $\bbS_f$. The objective $\ell$ promotes desirable network structural properties, such as sparsity or minimum-energy edge weights. The constraint set $\ccalS$ imposes requirements on the sought shift, such as each entry being non-negative or each node having at least one neighbor. This problem has been analyzed in detail in the context of network topology inference from nodal observations \cite{segarra2016topologyid}. 

One challenge of this approach is guaranteeing that the dual shift satisfies axioms (A1)-(A3), already deemed as desirable properties. For example, for axiom (A1) to hold, it is necessary for the original shift $\bbS$ itself to be optimal in the sense encoded by \eqref{E:general_problem_topology_id}. To elaborate on this, consider the normal unitary matrix $\bbU$ and the associated shift set $\ccalS_{\bbU} :=\{\bbS=\bbV \diag(\bbLambda) \bbV^H\;|\bbV=\bbU\;\text{and} \;\bblambda\in\mathbb{C}^N\}$. Moreover, let $\bbS^*$ denote the solution to \eqref{E:general_problem_topology_id}  when $\bbV_f=\bbU$ and $\bbS_f^*$ the solution when $\bbV_f=\bbU^H$. Then, it holds that: i) the dual shift for any $\bbS \in \ccalS_{\bbU}$ is given by $\bbS_f^*$, and ii) the dual of $\bbS_f^*$ is $\bbS^*$. Hence, $\bbS^*$ is the only element of $\ccalS_{\bbU}$ that guarantees that the dual of the dual is the original graph. Alternatively, one can see $\ccalS_{\bbU}$ as a shift class whose (canonical) representative is $\bbS^*$. With this interpretation any $\bbS \in \ccalS_{\bbU}$ is first mapped to $\bbS^*$ and then $\bbS^*$ serves as input for \eqref{E:general_problem_topology_id}. Under this assumption, the invertibility of the dual mapping is achieved.   

One important choice for the objective in \eqref{E:general_problem_topology_id} is to set $\ell (\cdot)=\|\cdot\|_0$, so that the goal is to find the sparsest shift (the one minimizing the number of pairwise relationships between the frequencies). Another interesting choice is to find a GSO that either minimizes the variability (maximizes the smoothness) of a given set of signals, or guarantees that the sum of the variability in the primal and dual domains does not exceed a given threshold -- which entails minor modifications to \eqref{E:general_problem_topology_id}.   

To ensure that the optimal $\bbS^*$ satisfies axioms (A1)-(A3), the approaches in Sections~\ref{sec_axiomatic_approach} and~\ref{sec_optimization_approach} can be combined. More specifically, one can solve \eqref{E:general_problem_topology_id} by optimizing over the class of admissible functions $g$; cf.~Theorem~\ref{th:dgraph}. This interesting (and more challenging) problem is left as future work.




\section{Illustrative Simulations}\label{sec:sims}
We provide a few simple examples illustrating that representations of the frequency domain that go beyond one dimension are of interest. 
To that end, we consider two primal graphs and compute their associated dual graphs by applying the methods in Sections \ref{sec_axiomatic_approach} (setting $\bblambda_f=\bbV\bblambda$) and \ref{sec_optimization_approach} (setting $\ell (\cdot)=\|\cdot\|_0$). The results are shown in Fig. \ref{fig_primal_dual_shifts}. The first row corresponds to the primal graph associated with the Discrete Cosine Transform (DCT) of type II \cite{moura2008DCT}, while the second row corresponds to an Erd\H{o}s-R\'enyi (ER) graph \cite{bollobas1998random} with $N=10$ and edge probability $p=0.15$. The first observation is that for none of the primal graphs the axiomatic approach gives rise to a sparse dual graph (cf. central column). This is important because the plotted dual graphs, which do not admit a one-dimensional representation, are legitimate representations of the pairwise interactions between the frequencies. 
The second observation is that the method promoting sparsity is able to find a \textit{very} sparse dual graph for the case of the DCT, but not for the ER graph. The sparse and regular dual graph obtained for the DCT serves as implicit validation of our approach and indicates that graphs with a very strong structure in the primal domain can be associated with strong and regular structures in the dual domain. In this extreme case, a one-dimensional representation could be argued to be sufficient. However, the dual shift corresponding to the ER graph demonstrates that the support of the frequency domain is, in general, more complicated and that structures that go beyond one-dimensional representations are required. These observations are confirmed for other types of graphs which, due to space limitations, are not presented here.

\begin{figure}
	\centering
	\includegraphics[width=0.5\textwidth]{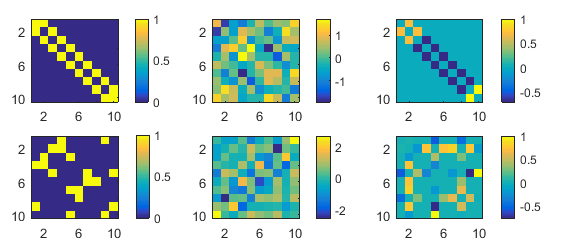}
	\caption{ The first \textit{row} corresponds to a DCT graph and the second an ER graph. The left \textit{column} plots the primal graph, the central the dual graph recovered using \eqref{eq:dgraph}, and the right column the dual graph recovered using \eqref{E:general_problem_topology_id}.}
	\vspace{-0.15in}
	\label{fig_primal_dual_shifts}
\end{figure}

\section{Conclusions and Open Questions}\label{sec:con}
This paper investigated the problem of identifying the support associated with the frequency representation of graph signals. Given the (primal) graph shift operator supporting graph signals of interest, the problem was formulated as that of finding a compatible \emph{dual} graph shift operator that serves as a domain for the frequency representation of these signals. We first identified the eigenvectors of the dual shift, showing that those correspond to how each of the nodes expresses the different graph frequencies.  We then proposed different alternatives to find the dual eigenvalues and characterized relevant properties that those eigenvalues must satisfy. Future work includes considering additional properties for the dual eigenvalues so that the size of feasible dual shift operators is reduced, and identifying additional results connecting the vertex domain with the frequency domain. The results in this paper constitute a first step towards understanding the structure of the signals in the frequency domain as well as developing enhanced GSP algorithms for signal compression, frequency grouping, filtering, and spectral estimation schemes.


\newpage

\end{document}